
\documentclass[12pt]{article}


%

\usepackage{epsfig}

\usepackage{amssymb}
   \usepackage{amsmath}

\usepackage{amsfonts}

 
%
   \setlength{\headheight}{-10pt}
 \setlength{\headsep}{-10pt}
    \setlength{\textwidth}{460pt}




        \setlength{\textheight}{670pt}



\setlength{\oddsidemargin}{10pt}
\def\ha{\mbox{$\frac{1}{2}$}}
\def\be{\begin{equation}}
\def\ee{\end{equation}}
\def\ba{\begin{array}{c}}
\def\ea{\end{array}}

\def\ben{$$}
\def\een{$$}

\newcommand{\bea}{\begin{eqnarray}}
\newcommand{\eea}{\end{eqnarray}}

\newcommand{\kt}{\rangle}

\newtheorem{thm}{Theorem}
\newtheorem{cor}[thm]{Corollary}
\newtheorem{lemma}[thm]{Lemma}

\newtheorem{conj}[thm]{Conjecture}
\newenvironment{proof}{\noindent {\bf Proof.}}{\hfill$\square$\vspace{3mm}\endtrivlist}

 \begin{document}

\titlepage

  \begin{center}{\Large \bf

Bound states emerging from below the continuum in a solvable ${\cal
PT}-$symmetric discrete Schr\"{o}dinger equation

 }\end{center}

\vspace{5mm}

  \begin{center}

{\bf Miloslav Znojil}

 \vspace{3mm}

Nuclear Physics Institute ASCR, 250 68 \v{R}e\v{z}, Czech
Republic\footnote{ e-mail: znojil@ujf.cas.cz}\\

 \vspace{3mm}

\end{center}

\vspace{5mm}

\section*{Abstract}

The phenomenon of the birth of an isolated quantum bound state at the
lower edge of the continuum
is studied for a particle moving along a discrete real line of
coordinates $x \in \mathbb{Z}$. The motion is controlled by a weakly
nonlocal $2J-$parametric external potential $V$ which is
non-Hermitian but ${\cal PT}-$symmetric. The model 
is found exactly solvable. The bound states are interpreted
as Sturmians. Their closed-form definitions are
presented and discussed up to $J=7$.

 \vspace{9mm}

%


 \begin{center}
\end{center}

 \newpage

\section{Introduction}

The current growth of interest in the mathematical aspects of
non-self-adjoint operators \cite{book,PHHQP} has one of its roots in
the increasing popularity of quantum physics of systems which
exhibit a combined parity plus time-reversal invariance {\it
alias\,} ${\cal PT}-$symmetry \cite{BB,Carl}. In this innovative
branch of physics one can encounter discontinuities in the
time-evolution processes \cite{Stefan} and the critical situations
in which a small change of a parameter $\lambda$ in a non-Hermitian
but ${\cal PT}-$symmetric Hamiltonian $H(\lambda)\neq
H^\dagger(\lambda)$ causes an abrupt change of the observable
properties of the system in question \cite{catast}. A broad class of
the latter phenomena could be called ``quantum catastrophes''. Their
mathematical background may be found explained in the Kato's
monograph on linear operators \cite{Kato}. The points of the sudden
change were given there the name of ``exceptional points'' (EP).

The localization of the singular EP values of
$\lambda=\lambda_{(EP)}$ originally helped to determine the radii of
the convergence of perturbation series. As long as the Hamiltonians
in question were usually assumed self-adjoint, their EP parameters
were complex, i.e., not carrying any immediate physical meaning
(cf.~Refs.~\cite{BW} for typical illustrative examples). The
paradigm has been recently changed due to a dramatic increase of
interest in non-Hermitian models. The EP singularities
$\lambda_{(EP)}$ can be real in these models~\cite{Alvarez}.
Traditionally, people were connecting their occurrence with the
physics of the so called ``open systems''. Such a scenario may be
found described in a number of reviews (cf., e.g.,
\cite{Moiseyev,Rotter}), emphasizing that the open systems are, as a
rule, unstable, resonant and quickly decaying or growing.

The comparatively recent interest in the ${\cal PT}-$symmetric
theory of reviews \cite{Carl,ali,SIGMA} has a perceivably different
motivation and, in particular, an entirely different perception of
the physical role of the real EP singularities. The difference is
reflected by the terminology (used, say, in nuclear physics
\cite{Geyer}) by which the non-Hermitian Hamiltonians of the stable
and unitary quantum systems are called quasi-Hermitian. This means
that they {\em share} many of their {\em observable} features with
the self-adjoint models of conventional textbooks.

One of the most characteristic distinctive features of
quasi-Hermitian Hamiltonians $H(\lambda)$ is that they resemble
their standard self-adjoint analogues merely inside certain domains
(i.e., open sets) ${\cal D}$ of the admissible (and, say, real)
values of the parameter or parameters $\lambda$. Naturally, the
localization of the boundaries ${\cal D}$ of these domains (i.e., of
the EP manifolds $\partial {\cal D}$) belongs to one of the main
mathematical tasks for theoreticians.

In the literature the subjects is being developed with due care.
There exist multiple model-based studies of ${\cal PT}-$symmetric
Hamiltonians $H(\lambda)$ in which the ${\cal PT}-$symmetry gets
spontaneously broken ({\it pars pro toto}, see
\cite{BB,sgezou,ambi}). This implies that at least some of the
bound-state energies become complex after the passage of $\lambda$
through an EP element $\lambda_{(EP)}$ of the boundary $\partial
{\cal D}$. In parallel, there also appeared a few studies
\cite{borisov,sdenisem,I} in which the authors described another
real-EP-related dynamical-evolution scenario. In it, the spontaneous
breakdown of ${\cal PT}-$symmetry appeared inhibited and replaced by
an instantaneous recovery of the symmetry (cf., e.g., the toy models
of Refs.~\cite{sdenisem,ptho,sdavidem} exhibiting the EP-related
unavoided energy-level crossings).

We shall be interested in another, qualitatively different
quantum-catastrophic scenario. It is encountered in the ${\cal
PT}-$symmetric quantum systems in which an isolated and real
energy level $E_n(\lambda)$ of a stable bound state grows with
$\lambda$ and touches the lower boundary of the essential
spectrum. Such a bound state may be perceived as disappearing
(or, in an opposite direction, emerging) at the
continuous-spectrum edge. 

The mathematically rather
sophisticated scenario of such a type is not too frequently
studied in the literature (cf. its samples in
\cite{Viola,Jonesdva}).
The gap will be partially filled in what follows. We shall consider
a multiparametric family of ${\cal PT}-$symmetric Hamiltonians $H$
and we shall search for a constructive guarantee of the coexistence
of bound states with the scattering solutions.

In section \ref{aa} we will introduce an ordinary difference
Schr\"{o}dinger equation which will prove useful for the purpose. In
subsequent sections \ref{aac} and  \ref{bb} we shall illustrate the
basic technical merits of our replacement of the real coordinates in
one dimension ($x \in \mathbb{R}$) by the discrete set of the grid
points ($x_{new} \in \mathbb{Z}$). We shall show that in a way
deviating from the conventional wisdom our preference of the
difference Schr\"{o}dinger equation is truly well motivated by
certain simplifications of the relevant parts of the underlying
mathematics. In the context of physics, our present choice of the
discrete interaction operators $V$ finds its important motivation in
the context of scattering. According to Refs.~\cite{Jones,discrete}
the very natural requirement of the causality and unitarity of the
$S-$matrix seems to {\em require} the use of the  weakly
non-local
interaction potentials of the present type.

Using the words of paper \cite{I} we may formulate our present task
as a ``study [of] the effect of the appearance of isolated
eigenvalues at the boundaries of the gaps in the essential
spectrum''. In section \ref{cc}, therefore, we shall turn attention
to the construction of the spectra. The weakly
non-local nature of our interaction potential $V$ will be shown to
lead to a number of answers to the phenomenologically inspired
qualitative questions.

In spite of an unusually flexible, $2J-$parametric
tridiagonal-matrix form of our interactions $V$, the model may be
claimed exactly solvable. Such a characteristic has two roots.
Firstly, it reflects the facilitated nature of the matching of
solutions in the models based on tridiagonal matrices \cite{machi}.
Secondly, in a way described in section \ref{dd} a decisive and
specific advantage of our model will be found in the possibility of
conversion of the traditional polynomial-zero-search bound-state
constructions into their closed-formula continued-fraction
alternatives, user-friendly and amenable to a straightforward
geometric interpretation.

Our results will be briefly discussed and summarized in
sections~\ref{ee} and \ref{summary}.

\section{The model\label{aa}}

\subsection{The real and discrete version of ${\cal PT}$ symmetry}

The doubly infinite discrete Laplacean
 \be
 H_0=
 \left [\begin {array}{rrr|rrr}
  \ddots&\ddots&&&&\\
  \ddots&2&-1&& &\\
 &-1&2&-1&&\\
 \hline
 {}&&-1&2&-1&\\
 && &-1&2&\ddots\\
 &&&&\ddots&\ddots
 \end {array}\right ]
 \label{kine}
 \ee
can be interpreted as a quantum Hamiltonian of a free particle
moving along the discrete real line, with the usual continuous
coordinate $x \in (-\infty,\infty)$ replaced by the grid of points
$x_n=n$ with, say, $ n = \ldots, -1, 0$ to the left and  $ n = 1, 2,
\ldots\, $ to the right from the partitioning line. As long as
matrix (\ref{kine}) is real and symmetric, it may be treated as
time-reversal symmetric. Formally, we write $H_0{\cal T}={\cal
T}H_0$. We may require that the time-reversal operator ${\cal T}$
performs the transposition (and, in general, also complex
conjugation) of matrices \cite{ali}.

In the spirit of review \cite{Carl}, the same Hamiltonian may be
also perceived as parity-time-reversal symmetric, $H_0{\cal
PT}={\cal PT}H_0$. The real and symmetric matrix of parity may be
chosen in the following antidiagonal matrix form
 \be
 {\cal P}=
 \left [\begin {array}{rlcrr}
  &&&&\ \ \dot{ \dot{\dot{}\   }\  }\\{}
  &&&\  1 &
 \\{}
 && 1 &&\\{}&1 \ &&&
 \\{}\ \ \ \
\dot{ \dot{\dot{}\   } \ }
 &&&&\end {array}\right ]\,
 \label{PTdef}
 \ee
of a square root of the unit matrix.

In the increasingly popular $ {\cal PT}-$symmetric quantum mechanics
\cite{book,Carl}, the discrete-matrix nature of the kinetic-energy
operator (\ref{kine}) and of its non-Hermitian $ {\cal
PT}-$symmetric perturbations
 \be
 H = H_0+V \neq H^\dagger
 \,,
 \ \ \ \ \
 H{\cal PT}={\cal PT}H
 \,
 \label{syst}
 \ee
played the key role, e.g., in the conceptual studies of the
scattering \cite{discrete,scatt,sigmaall}. These considerations were
facilitated by the $ {\cal PT}-$symmetric toy-model choices of
various special cases of the general real-matrix nearest-neighbor
interaction
 \be
 V=
 \left [\begin {array}{rrr|rrr}
  \ddots&\ddots&&&&\\
  \ddots&0&b&& &\\
 &-b'&0&a&&\\
 \hline
 {}&&-a'&0&b&\\
 && &-b'&0&\ddots\\
 &&&&\ddots&\ddots
 \end {array}\right ]\,.
 \label{potecy}
 \ee
In the light of the tridiagonal-matrix form of the kinetic-energy
component (\ref{kine}) of Hamiltonians (\ref{syst}) the same,
slightly non-local tridiagonality of the interaction could enrich
the physics without making the calculations too complicated. The
readers should also consult paper \cite{recur} in which we
demonstrated, that, and why, the choice of the non-Hermitian {\em
tridiagonal\,} real-matrix Hamiltonians leaves several technical
aspects of the constructive model-building unexpectedly
user-friendly and preferable.

In our present paper we shall study Hamiltonians (\ref{syst}) +
(\ref{potecy}), i.e., the doubly infinite matrices
 \be
 H=
 \left [\begin {array}{ccc|ccc}
  \ddots&\ddots&&&&\\
  \ddots&2&-1+b&& &\\
 &-1-b'&2&-1+a&&\\
 \hline
 {}&&-1-a'&2&-1+b&\\
 && &-1-b'&2&\ddots\\
 &&&&\ddots&\ddots
 \end {array}\right ]
 \,
 \label{douinf}
 \ee
in the bound-state dynamical regime. We shall assume that the
parameters form a real $2J-$plet $a,a',b,b',\ldots,z,z'$, i.e., that
the dynamics is controlled by the $2J$ independently variable real
matrix elements of $V$, with the outermost
$V_{-J,-J+1}=V_{J-2,J-1}=z$ and $V_{1-J,-J}=V_{J-1,J-2}=-z'$, etc.

The choice of Eq.~(\ref{douinf}) could be generalized to the $J \to
\infty$ limit with infinitely many free parameters (in this respect
see also section \ref{dd} below). Alternatively, the analogous $
{\cal PT}-$symmetric interaction Hamiltonians
 \ben
 H=
 \left [\begin {array}{ccc|c|ccc}
  \ddots&\ddots&&&&&\\
  \ddots&2&-1&& &&\\
 &-1&2&-1&&&\\
 \hline
 {}&&-1&2&-1&&\\
 \hline
 && &-1&2&-1&\\
 &&& &-1&2&\ddots\\
 & &&&&\ddots&\ddots
 \end {array}\right ]
 +
 \left [\begin {array}{ccc|c|ccc}
  \ddots&\ddots&&&&&\\
  \ddots&0&b&& &&\\
 &-b'&0&a&&&\\
 \hline
 {}&&-a'&0&a&&\\
 \hline
 && &-a'&0&b&\\
 &&& &-b'&0&\ddots\\
 &&&&&\ddots&\ddots
 \end {array}\right ]\,
 \een
of the odd dimensions after truncation and of a different
partitioning could have been considered as well. For the sake of
brevity, nevertheless, the latter version of the doubly infinite
Hamiltonians will not be studied in our present paper.

\subsection{Boundary conditions}

As long as $J < \infty$ remains finite, our present Schr\"{o}dinger
equation
 \be
 H |\psi\kt = E\,|\psi\kt\,
 \label{12r}
 \ee
is a linear difference equation of the second order, with the two
independent asymptotic solutions of the form
 \be
 |\psi\kt_n=const \times \exp(n\alpha)\,,
 \ \ \ \ \ |n| \gg J \geq 1
 \label{genesolu}
 \ee
where $\alpha \in \mathbb{C}$ is a suitable complex number related
to the (in general, complex) energy by the formula $2-E=2\, {\rm
cosh\,} \alpha$. These expressions represent the formally exact
solutions of Eq.~(\ref{12r}) in the whole free-motion kinematical
range, i.e., for $n \leq J$ and/or for $n \geq J-1$. Suitable
boundary conditions must be added.

\subsubsection{Scattering states}

The existence of the asymptotically traveling waves is possible
under condition ${\rm Re} \ \alpha=0$, i.e., for the energies which
belong to the continuous part of the spectrum of our Hamiltonians,
i.e., for  $E \in (0,4)$. In this dynamical regime a sample of the
non-numerical construction of the experimentally relevant reflection
and transmission coefficients may be found sampled, e.g., in
Ref.~\cite{discrete}.

\subsubsection{Bound states}

We shall exclusively pay attention to the bound state solutions of
our Schr\"{o}dinger Eq.~(\ref{12r}) and, in particular, to the
explicit constructive proofs of their existence. Once one requires
that these solutions decrease at large $|n|\gg J$, we have to split
the discussion according to the sign of our discrete coordinate $n$.

Along the negative discrete half-axis of $n$ we only have to admit
the exponentials (\ref{genesolu}) with $\alpha=\varphi$ such that
${\rm Re }\ \varphi> 0$. Having $2-E=2\, {\rm cosh\,} \varphi\ $
this means that whenever the bound-state energy itself becomes real,
it must be negative. In other words, our model does not admit the
stable, unitarily evolving bound states embedded in the continuum.

Along the positive discrete half-axis of $n$ we can only use
exponentials (\ref{genesolu}) with an opposite sign in
$\alpha=-\varphi$. This enables us to postulate
 \be
 |\psi\kt_n=\lambda  \times \exp[(n+J)\varphi]\,, \ \ n \leq -J
 \ee
and
 \be
  |\psi\kt_n=\varrho \times \exp[(J-1-n)\varphi]\,, \ \ n \geq
  J-1\,.
  \ee
These relations guarantee the normalizability of the bound states,
with the energy-representing parameter $\varphi$ still being complex
in general, restricted only by the above-mentioned bound-state
constraint of normalizability ${\rm Re }\ \varphi> 0$. A matching of
the asymptotics to the remaining part of our Schr\"{o}dinger
equation must be performed.

\section{Bound state in two-parametric $V$ ($J=1$)\label{aac}}

\subsection{The matching of wave functions in the origin}

About the first nontrivial $J=1$ Schr\"{o}dinger equation
 \be
  \left [\begin {array}{ccc|ccc}
  \ddots&\ddots&&&&\\
  \ddots&2-E&-1&& &\\
 &-1&2-E&-1+a&&\\
 \hline
 {}&&-1-a'&2-E&-1&\\
 && &-1&2-E&\ddots\\
 &&&&\ddots&\ddots
 \end {array}\right ]\,
 \left [
 \ba
 \vdots\\
 |\psi\kt_{-2}\\
 |\psi\kt_{-1}\\
 \hline
 |\psi\kt_0\\
 |\psi\kt_1\\
 \vdots
 \ea
 \right ]=0\,
 \label{hles}
 \ee
we may say that up to the respective constant multiplier the
bound-state eigenvectors are already known from the asymptotic
analysis,
 \be
 \ldots,
 \ \ \ \ |\psi\kt_{-2}=\lambda  \, \exp(-\varphi)\,,
 \ \ \ \ \
 |\psi\kt_{-1}=\lambda  \,,
 \ \ \ \ \
 |\psi\kt_0=\varrho \,,
 \ \ \ \ \
 |\psi\kt_1=\varrho \, \exp(-\varphi)\,,
 \ \ \ \ldots\,.
 \ee
This means that their matching will be mediated by the two innermost
rows of system~(\ref{hles}),
 \ben
 \left [\begin {array}{cc|cc}
 -1&2\, {\rm cosh\,} \varphi&-1+a&0\\
 \hline
 0 &-1-a'&2\, {\rm cosh\,} \varphi&-1
 \end {array}\right ]\,
 \left [
 \ba
 |\psi\kt_{-2}\\
 |\psi\kt_{-1}\\
 \hline
 |\psi\kt_0\\
 |\psi\kt_1
 \ea
 \right ]=0\,
 \een
i.e.,
 \ben
 \left [\begin {array}{cc|cc}
 -1&2\, {\rm cosh\,} \varphi&-1+a&0\\
 \hline
 0&-1-a'&2\, {\rm cosh\,} \varphi&-1
 \end {array}\right ]\,
 \left [
 \ba
 \lambda  \, \exp(-\varphi)\,\\
 \lambda  \,\\
 \hline
 \varrho \,\\
 \varrho \, \exp(-\varphi)\,
 \ea
 \right ]=0\,
 \een
i.e.,
 \ben
 \left [\begin {array}{cc}
   {\rm exp\,} \varphi&-1+a\\
 -1-a'&{\rm exp\,} \varphi
 \end {array}\right ]\,
 \left [
 \ba
    \lambda  \\
  \varrho
 \ea
 \right ]=0\,.
 \een
This relation has a nontrivial solution iff the determinant
vanishes,
 \be
  {\rm exp}\,2\, \varphi-(-1+a)(-1-a')=0\,.
  \label{trigi}
  \ee

\subsection{Bound-state-supporting
parameters}

%
%
\begin{figure}[h]                     
\begin{center}                         
\epsfig{file=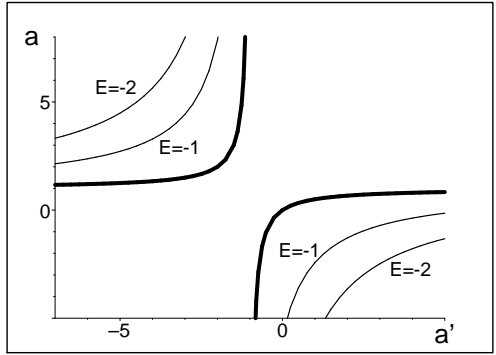,angle=270,width=0.4\textwidth}
\end{center}                         
\vspace{-2mm} \caption{The thick-curve boundary of the
doubly-connected domain ${\cal D}(a,a')$ supporting the bound state
in the $a-a'$ plane at $J=1$. The fixed-energy submanifolds are
sampled by the thinner hyperbolic curves.
 \label{fiut}}
\end{figure}

 \noindent
The analysis of solutions of the $J=1$ toy-model Eq.~(\ref{trigi})
is entirely elementary. It implies that in the two-dimensional plane
of parameters $a$ and $a'$ the boundary $\partial {\cal D}(a,a')$ of
the doubly connected domain ${\cal D}(a,a')$ of the existence of a
bound state will coincide with the hyperbolic curve $a=a'/(1+a')$
(cf. Fig.~\ref{fiut}). As long as the free parameters $a$ and $a'$
are real, we may abbreviate $\ (1-a)(1+a')=1+u(a,a')\ $ and have the
following obvious result.

\begin{lemma}

At $J=1$ the bound state exists iff
 $\ u(a,a')>0$.

\end{lemma}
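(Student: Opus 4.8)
The plan is to read the lemma straight off the $J=1$ matching condition (\ref{trigi}) together with the normalizability requirement ${\rm Re}\,\varphi>0$ imposed in the discussion of the bound-state boundary conditions. First I would rewrite (\ref{trigi}): since $(-1+a)(-1-a')=(1-a)(1+a')=1+u(a,a')$ is exactly the abbreviation introduced just before the lemma, the matching condition collapses to
\[
\exp(2\varphi)=1+u(a,a').
\]
Hence a bound state below the continuum is present precisely when this scalar equation has a solution $\varphi$ that is \emph{real and positive}: such a $\varphi$, fed into the two exponential tails prescribed in Section~\ref{aa}, yields a genuine square-summable eigenvector of $H$ at the real energy $E=2-2\cosh\varphi<0$, and conversely any real negative bound-state energy corresponds, via $2-E=2\cosh\varphi$, to such a real positive $\varphi$.

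The core of the argument is then a one-line observation. If $\varphi>0$ solves $\exp(2\varphi)=1+u$, then $1+u=\exp(2\varphi)>1$, so $u>0$. Conversely, if $u>0$, set $\varphi:=\tfrac{1}{2}\ln(1+u)$; this is real and strictly positive, it solves the matching equation, and a short computation gives the bound-state energy $E=2-(2+u)/\sqrt{1+u}$, which lies in $(-\infty,0)$ and tends to the lower band edge $0$ as $u\downarrow 0$ — precisely the ``emergence from below the continuum'' of the title. This establishes the asserted equivalence.

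The only point that needs a word of care — and the step I would flag as the main obstacle — is the status of the remaining, non-real solutions of $\exp(2\varphi)=1+u$, which must be shown not to be confused with the bound state in question. When $1+u>0$ the general solution is $\varphi=\tfrac{1}{2}\ln(1+u)+i\pi k$ with $k\in\mathbb{Z}$; the choice of odd $k$ has the same real part but energy $E=2+2\cosh\varphi>4$, i.e.\ it sits \emph{above} the continuum and is a different state, not the one emerging from below. When $1+u<0$ (possible only for $u<-1$) the phase forces ${\rm Im}\,\varphi\in\tfrac{\pi}{2}+\pi\mathbb{Z}$ and ${\rm Re}\,\varphi=\tfrac{1}{2}\ln|1+u|$, which is positive only for $u<-2$; in that range one does obtain a normalizable solution, but with non-real $E=2-2\cosh\varphi$, so it constitutes a spontaneously ${\cal PT}$-broken complex-conjugate pair rather than a bound state. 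I would note these two branches explicitly, so that ``the bound state'' in the statement is unambiguously the real, below-band level, after which the lemma follows.
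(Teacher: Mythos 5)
Your proposal is correct and follows essentially the same route as the paper, which treats the lemma as an immediate ("obvious") consequence of the matching condition (\ref{trigi}) rewritten as $\exp(2\varphi)=1+u$ together with the normalizability requirement of a real $\varphi>0$, i.e.\ $1+u>1$. Your extra discussion of the spurious branches (the $\mathrm{Im}\,\varphi=\pi$ solution above the band and the complex-energy pair for $u<-2$) is a careful addendum rather than a different method, and it correctly clarifies that ``the bound state'' of the lemma is the real, negative-energy level below the continuum.
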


 \noindent
In the $a-a'$ plane only the two lines with $a=1$ or $a'=-1$ are
singular. After their exclusion the bound state is found to exist
either for $a'>-1$ and $a<a'/(1+a')$ (i.e., under a hyperbolic-curve
maximum of $a$) or for $a'<-1$ and $a>a'/(1+a')$ (i.e., above a
hyperbolic-curve minimum of $a$). In the definition of the
Hamiltonian it would make sense to simplify the hyperbolic-curve
geometry of the critical-parameter boundaries in the $a-a'$ plane
and to replace the $a-a'$ plane, say, by the $u-a'$ plane (in which
the energies would not not depend on $a'$) or by the $a-u$ plane (in
which the energies would not depend on $a$). We shall see below that
the same type of reparametrization may be also recommended at any
higher number of parameters $2J>2$.

\begin{cor}
There are no $J=1$ bound states at $a=a'$.

\end{cor}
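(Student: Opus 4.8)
The plan is to reduce the statement to the inequality already established in the Lemma and then perform a one-line substitution. First I would recall that, by the Lemma, at $J=1$ a bound state exists if and only if $u(a,a')>0$, where $u$ was introduced through the abbreviation $(1-a)(1+a')=1+u(a,a')$. Hence the corollary is equivalent to the claim that $u(a,a)\le 0$ for every real $a$.

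Next I would solve the defining relation for $u$, obtaining $u(a,a')=(1-a)(1+a')-1$, and specialize to the diagonal $a=a'$. This gives
\be
 u(a,a)=(1-a)(1+a)-1=1-a^2-1=-a^2\,.
\ee
Since $-a^2\le 0$ for all $a\in\mathbb{R}$, the strict inequality $u(a,a)>0$ never holds, and the Lemma then forbids the existence of a bound state. One should also note that the diagonal intersects the singular line $a=1$ (from Fig.~\ref{fiut}) only at the isolated point $a=a'=1$, and there $u=-1<0$ as well, so nothing is lost by the earlier exclusion of the lines $a=1$ and $a'=-1$.

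There is essentially no obstacle here: the only thing to be careful about is keeping the sign conventions of the Lemma straight, i.e.\ remembering that the existence criterion involves $(1-a)(1+a')$ rather than $(-1+a)(-1-a')=(1-a)(1+a')$ with a relabelling, so that the quadratic $-a^2$ comes out with the correct (nonpositive) sign. Since $-a^2$ vanishes exactly at $a=a'=0$ and is negative otherwise, one even sees that the diagonal is tangent to the boundary $\partial{\cal D}(a,a')$ at the origin, which is consistent with the hyperbolic geometry displayed in Fig.~\ref{fiut}.
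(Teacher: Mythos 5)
Your proof is correct and follows exactly the route the paper intends: the corollary is an immediate consequence of the Lemma, since at $a=a'$ one has $u(a,a)=(1-a)(1+a)-1=-a^2\le 0$, so the existence criterion $u>0$ can never be met. The paper leaves this as an elementary observation, and your substitution (including the sign check and the tangency remark at the origin) fills it in correctly.
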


 \noindent
The latter observation sounds elementary but it could be extended to
all $J$, forming an important addendum to the minimally nonlocal
non-Hermitian models of Ref.~\cite{scatt} with $a=a'$, $b=b'$, etc.
These models were shown to support the scattering which may be made,
not quite expectedly \cite{Jones}, causal and unitary in an {\it ad
hoc\,} physical Hilbert space (cf. also
Refs.~\cite{discrete,scatt,SIGMAtri} in this respect).

\section{Matching conditions at $J \geq 2$\label{bb}}

\subsection{The four-parametric $V$ as a methodical guide ($J=2$)}

At $J=2$ the insertion of the available exact solutions
 \be
 \ldots,
 \ \ \ \ |\psi\kt_{-3}=\lambda  \, \exp(-\varphi)\,,
 \ \ \ \ \
 |\psi\kt_{-2}=\lambda  \,,
 \ \ \ \ \
 |\psi\kt_1=\varrho \,,
 \ \ \ \ \
 |\psi\kt_2=\varrho \, \exp(-\varphi)\,,
 \ \ \ \ldots\,.
 \ee
reduces the complete Schr\"{o}dinger equation to the four matching
conditions
 \ben
  \left [\begin {array}{ccc|ccc}
  -1&2\, {\rm cosh\,} \varphi&-1+b&0& 0&0\\
 0&-1-b'&2\, {\rm cosh\,} \varphi&-1+a&0&0\\
 \hline
 {0}&0&-1-a'&2\, {\rm cosh\,} \varphi&-1+b&0\\
 0&0&0 &-1-b'&2\, {\rm cosh\,} \varphi&-1
 \end {array}\right ]\,\,
 \left [
 \ba
 \lambda  \, \exp(-\varphi)\,\\
 \lambda \\
 |\psi\kt_{-1}\\
 \hline
 |\psi\kt_0\\
 \varrho \\
 \varrho \, \exp(-\varphi)
 \ea
 \right ]=0\,
 \een
{\it alias}
 \ben
  \left [\begin {array}{cc|cc}
  {\rm exp\,} \varphi&-1+b&0& 0\\
 -1-b'&2\, {\rm cosh\,} \varphi&-1+a&0\\
 \hline
 0&-1-a'&2\, {\rm cosh\,} \varphi&-1+b\\
 0&0 &-1-b'& {\rm exp\,} \varphi
 \end {array}\right ]\,\,
 \left [
 \ba
 \lambda \\
 |\psi\kt_{-1}\\
 \hline
 |\psi\kt_0\\
 \varrho
 \ea
 \right ]=0\,.
 \een
Once we abbreviate
 $$(1-a)\,(1+a')= 1+u\,,\ \ \ \ (1-b)\,(1+b')=1+v$$
and once we eliminate $a=1-(1+u)/(1+a')$ and $b=1-(1+v)/(1+b')$ and
require that
  \be
  \exp 2\,\varphi=x^2=t > 1\,,
  \label{varia}
  \ee
an elementary algebra leads to the amazingly compact and exactly
solvable secular equation
 \be
 t^2-t\,(1+u+2\,v) +v^2=0
 \label{13t}
 \ee
{\it alias}
 \be
 (t-v)^2=t(1+u)
 \label{14t}
 \ee
{\it alias}
 \be
 (t-v)=\pm \sqrt{t(1+u)}\,.
 \label{15t}
 \ee
With $1+u>0$ and for  $v>v_{minimal}=-(1+u)/4$ we see that in
Eq.~(\ref{15t}) the left-hand-side straight line {\em always}
intersects the right-hand-side parabola at the two real values of
$t>0$. From Eq.~(\ref{14t}) we may deduce, equivalently, that the
right-hand-side straight line {\em always} intersects the
left-hand-side parabola at the same two real values of $t>0$.
Naturally, only the roots with the property $t>1$ correspond to the
physical bound states.

\begin{figure}[h]                     
\begin{center}                         
\epsfig{file=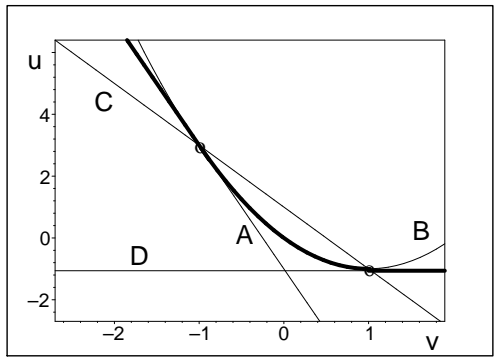,angle=270,width=0.4\textwidth}
\end{center}                         
\vspace{-2mm} \caption{The thick-curve boundary of the domain
supporting the $J=2$ ground state in the $u-v$ plane. The energies
become complex for $u<-1-4v$, i.e., to the left from the line ``A''.
This is the domain's boundary at $v < -1$. The parabola of
$u=(v-1)^2-1$ (curve ``B'') forms the boundary for $v \in (-1,1)$
while the line ``C'' with $u=1-2v$ is never the boundary because at
the larger $v > 1$ the role of the boundary is played by the second
energy-reality restriction $u>-1$, i.e., by the horizontal line
``D''.
 \label{fihr}}
\end{figure}

\begin{lemma}
\label{leje}

At $J=2$ the ground state exists in the upper part of the $u-v$
plane, viz., in the domain of $u>u_{min}(v)$ bounded, from below, by
the thick curve $u_{min}(v)$ of Fig.~\ref{fihr}.

\end{lemma}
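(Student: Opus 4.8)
The plan is to analyze the secular equation~(\ref{13t}) in the form~(\ref{15t}), $(t-v)=\pm\sqrt{t(1+u)}$, exactly as set up in the text, and to determine precisely when at least one of its roots satisfies the bound-state condition $t>1$. Recall that we already know, for $1+u>0$ and $v>v_{minimal}=-(1+u)/4$, that there are always two real positive roots $t_\pm$; the remaining work is purely to decide which region of the $(u,v)$ plane makes the \emph{larger} root exceed $1$, since a bound state requires $\varphi$ with $\mathrm{Re}\,\varphi>0$, i.e.\ $t=\exp 2\varphi>1$.

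First I would reduce the question to evaluating the quadratic $f(t)=t^2-t(1+u+2v)+v^2$ at the endpoint $t=1$ and examining the sign of its derivative there. Since the leading coefficient is positive and the two roots are real, the larger root exceeds $1$ precisely when either $f(1)<0$, or ($f(1)\ge 0$ and the vertex abscissa $t_{vertex}=(1+u+2v)/2>1$). Computing $f(1)=1-(1+u+2v)+v^2=v^2-2v-u=(v-1)^2-1-u$, the condition $f(1)<0$ becomes $u>(v-1)^2-1$, which is exactly the parabola ``B'' of Fig.~\ref{fihr}. The vertex condition $t_{vertex}>1$ becomes $u+2v>1$, i.e.\ $u>1-2v$, the line ``C''. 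So the raw ``$t>1$'' region is $\{u>(v-1)^2-1\}\cup\{u>1-2v\ \text{and}\ f(1)\ge 0\}$; one then checks that $1-2v\le(v-1)^2-1$ for all $v$ with equality at $v=1$, so line ``C'' lies (weakly) below parabola ``B'' and is never active as a boundary — matching the figure caption's remark. The effective lower envelope from the $t>1$ requirement alone is therefore the parabola ``B'', $u_{min}=(v-1)^2-1$.

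Next I would overlay the energy-reality constraints. For a genuine (stable, real-energy) ground state we also need $\varphi$ real, hence $t>0$ and, crucially, the discriminant of~(\ref{13t}) nonnegative \emph{and} the relevant root real and positive; the text already isolated the two reality restrictions $u>-1-4v$ (line ``A'', the condition $v>v_{minimal}$ rewritten) and $u>-1$ (line ``D''). Combining: the domain supporting the $J=2$ ground state is the intersection of $\{u>(v-1)^2-1\}$ with $\{u>-1-4v\}\cap\{u>-1\}$, and one checks crossing points — parabola ``B'' meets line ``A'' at $v=-1$ (where both give $u=3$), and ``B'' meets line ``D'' at $v=0$ and $v=2$ (where $u=-1$) — to see that the pointwise maximum $u_{min}(v)=\max\{(v-1)^2-1,\,-1-4v,\,-1\}$ is: line ``A'' for $v<-1$, parabola ``B'' for $v\in(-1,1)$, and line ``D'' ($u=-1$) for $v>1$. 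This is exactly the thick curve described in Fig.~\ref{fihr}, and declaring $u_{min}(v)$ to be this piecewise envelope proves the lemma.

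The main obstacle — really the only nonroutine point — is the careful case bookkeeping near the junctions $v=-1$ and $v=1$: one must verify that on each subinterval the binding constraint is the claimed one and that no spurious root (the smaller root $t_-$, or a root with $t\le 0$) sneaks in to enlarge the region. This is handled cleanly by noting $t_+t_-=v^2\ge 0$ and $t_++t_-=1+u+2v$, so both roots are positive exactly when $1+u+2v>0$ (which is line ``A''), and then using $f(1)$ and $t_{vertex}$ as above to locate $t_+$ relative to $1$; everything else is elementary algebra on parabolas and lines, and the consistency with the figure serves as a built-in check.
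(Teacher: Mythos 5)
Your final description of the thick curve is the right one, but the argument you give for it does not hold together; in fact your own prescription, taken literally, produces a different (too small) domain. The decisive slip is the claim that $1-2v\le(v-1)^2-1$ for all $v$: the difference is $(v-1)^2-1-(1-2v)=v^2-1$, so line ``C'' lies \emph{above} parabola ``B'' precisely on $-1<v<1$ and below it for $|v|>1$. Consequently the set where $t_+>1$ (granted real roots) is bounded by the parabola only for $|v|<1$; for $|v|>1$ it is the union $\{f(1)<0\}\cup\{f(1)\ge 0,\ t_{vertex}>1\}=\{u>1-2v\}$, governed by the vertex condition, not by ``B''. Your next step, ``the domain is the intersection of $\{u>(v-1)^2-1\}$ with $\{u>-1-4v\}\cap\{u>-1\}$'', is therefore wrong: since $(v-1)^2-1-(-1-4v)=(v+1)^2\ge0$ and $(v-1)^2-1-(-1)=(v-1)^2\ge0$, the parabola majorizes both lines for every $v$, so that intersection is just $\{u>(v-1)^2-1\}$ — it excludes the strips $-1-4v<u<(v-1)^2-1$ at $v<-1$ and $-1<u<(v-1)^2-1$ at $v>1$, where the ground state does exist (there both roots exceed $1$; this is exactly the region where the first excited state of Lemma~\ref{ledv} lives, and that lemma states it is a \emph{subdomain} of the present one lying below curve ``B''). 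For the same reason your evaluation of $\max\{(v-1)^2-1,\,-1-4v,\,-1\}$ as ``A for $v<-1$, B for $|v|<1$, D for $v>1$'' contradicts the formula itself, and the junction data are off: ``B'' meets ``A'' at $v=-1$ (with $u=3$) but is tangent to ``D'' ($u=-1$) at $v=1$, not at $v=0$ and $v=2$ (those are the zeros of the parabola, $u=0$).

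The repair is the bookkeeping the paper's short proof encodes: write $t_+=\frac{1}{2}\bigl[1+u+2v+\sqrt{(1+u)(1+u+4v)}\bigr]$ from Eq.~(\ref{13t}). For $-1<v<1$ reality is automatic above ``B'' (since $1+u>(v-1)^2$ and $1+u+4v>(v+1)^2$ there) and $f(1)<0$ gives $t_+>1$, while below ``B'' one has $f(1)\ge0$ and $u<1-2v$ (because ``C'' is above ``B'' on this interval), hence $t_+\le1$; so ``B'' is the boundary there. For $v<-1$ (resp. $v>1$) the reality requirement on the positive-root branch, $u>-1-4v$ (resp. $u>-1$), already implies $1+u+2v>2$, i.e. the vertex condition, so $t_+>1$ holds throughout; the ``both factors negative'' branch of the discriminant gives $1+u+2v<0$ with $t_+t_-=v^2\ge0$, i.e. nonpositive roots. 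Hence the boundary is line ``A'' for $v<-1$ and line ``D'' for $v>1$, and the thick curve is the piecewise envelope A/B/D as claimed — but obtained by a case split in $v$, not by intersecting a single parabola condition with the reality half-planes.
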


\begin{proof}
The larger root
 $$t_+ =\ha \left [1+u+2v+ \sqrt{(1+u+4v)(1+u)}\right ]$$
of the quadratic algebraic Eq.~(\ref{13t}) remains real to the right
from the line ``A'' of Fig.~\ref{fihr} (which is decisive for $v <
-1$) and above the line ``D'' (which is decisive for $v >1$). The
parabolic segment of the boundary represents the restriction
$t_+>1$.

\end{proof}

\begin{lemma}

\label{ledv}

The $J=2$ model can also support the first excited state. The
two-dimensional parametric subdomain of its existence is doubly
connected. As a subdomain of the domain of Lemma~\ref{leje} it
shares its lower bound but lies {below} the parabolic curve ``B'' of
Fig.~\ref{fihr}.

\end{lemma}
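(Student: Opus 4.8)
The idea is to read off the first excited state directly from the quadratic secular equation (\ref{13t}), whose roots $t_\pm$ are written out in the proof of Lemma~\ref{leje}. Since $2-E=2\,{\rm cosh\,}\varphi$ and $t=\exp 2\varphi$, the larger root $t_+$ carries the lowest energy (the ground state of Lemma~\ref{leje}), so the first excited state, whenever it exists, is carried by the smaller root $t_-=\frac{1}{2}\bigl[1+u+2v-\sqrt{(1+u+4v)(1+u)}\bigr]$, and it exists precisely when $t_->1$. The first thing I would record is the elementary remark that the product of the roots of (\ref{13t}) is $v^2\ge 0$ and $t_+\ge t_-$; hence $t_->1$ forces $t_+>1$, which already makes the excited-state region a subdomain of the domain of Lemma~\ref{leje}, as asserted.

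Next I would invoke the standard test for both roots of a monic real quadratic $f(t)=t^2-t(1+u+2v)+v^2$ to be real and strictly larger than $1$, namely $f(1)>0$, the vertex condition $\frac{1}{2}(1+u+2v)>1$, and nonnegativity of the discriminant, $(1+u+4v)(1+u)\ge 0$. Using $f(1)=v^2-2v-u=(v-1)^2-1-u$ and the standing assumption $1+u>0$, these three conditions become, respectively, $u<(v-1)^2-1$ (strictly \emph{below} the parabola ``B''), $u>1-2v$ (line ``C''), and $u>-1-4v$ (line ``A''). Conversely, if the ground state already exists ($t_+>1$) and in addition $u<(v-1)^2-1$, then $f(1)>0$ forces both roots onto the same side of $1$, so both exceed $1$ and $t_->1$. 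In other words, the excited-state domain is \emph{exactly} the part of the domain of Lemma~\ref{leje} lying below ``B''; the two remaining assertions then only require the shape of that set to be identified.

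That identification is a short case split in $v$, resting on the incidences that ``A'', ``B'' and ``C'' all pass through $(v,u)=(-1,3)$ and that ``B'', ``C'' and the horizontal line ``D'' ($u=-1$) all pass through $(v,u)=(1,-1)$. For $v<-1$ the lower edge of the domain of Lemma~\ref{leje} is ``A'' and one checks $-1-4v<(v-1)^2-1$, so the excited-state region is the strip $-1-4v<u<(v-1)^2-1$; for $v>1$ that lower edge is ``D'' and $-1<(v-1)^2-1$, giving the strip $-1<u<(v-1)^2-1$; and for $-1<v<1$ the domain of Lemma~\ref{leje} lies entirely \emph{above} ``B'', so its intersection with $\{u<(v-1)^2-1\}$ is empty. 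Hence the excited-state domain inherits its lower boundary (``A'' for $v<-1$, ``D'' for $v>1$) from the ground-state domain, lies below ``B'', and — since the two strips pinch off at $(-1,3)$ and $(1,-1)$ with nothing between them — falls into two disjoint pieces, i.e.\ it is doubly connected in the same sense as the $J=1$ domain. The only mildly delicate point, I expect, is the bookkeeping in this case analysis (deciding which of ``A'', ``C'', ``D'' and $u>-1$ is the binding lower constraint in each range of $v$), together with checking the degenerate loci ($v=0$ and the double-root line ``A'', where $t_-$ equals $t_+$ or fails $t_->1$); none of these enlarges the region.
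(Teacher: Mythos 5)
Your proposal is correct and follows essentially the same route as the paper: it analyzes when the smaller root $t_-$ of the quadratic secular equation (\ref{13t}) exceeds $1$, which is exactly what the paper's (much terser) proof does by noting that the square-root term must be small so that "the opposite side of the parabolic boundary becomes allowed". Your version simply makes this rigorous via the standard criterion $f(1)>0$, vertex $>1$, discriminant $\ge 0$ and the case split in $v$, thereby also reproving the explicit strips $u\in(-1,(v-1)^2-1)$ for $v>1$ and $u\in(-1-4v,(v-1)^2-1)$ for $v<-1$ that the paper states without proof after the lemma and reads off Fig.~\ref{fihr}.
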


\begin{proof}
The smaller root
 $$t_+ =\ha \left [1+u+2v- \sqrt{(1+u+4v)(1+u)}\right ]$$
can only become bigger than one when its square-root part is
sufficiently small. The opposite side of the parabolic boundary
becomes allowed, therefore.

\end{proof}

 \noindent
The first excited state can
only exist at $v \notin (-1,1)$; either for $u \in
(-1,u_{max}^{(+)}(v))$ at positive $v>1$, or for positive $u>3$ in
an interval of $u \in (u_{min}^{(-)}(v),u_{max}^{(-)}(v))$ at
negative $v<-1$.
These observations may be checked in Fig.~\ref{fihr}.

\subsection{General effective secular determinants (any $J\geq 2$)}

The general $J\geq 2$ matching condition has the $2J-$dimensional
matrix form
 \be
 \label{maco}
  \left [\begin {array}{ccccccc}
  {\rm exp\,} \varphi&-1+z&0& \ldots& &\ldots& 0\\
 -1-z'&2\, {\rm cosh\,} \varphi&-1+y&\ddots& && \vdots\\
 0&\ddots&\ddots&\ddots&0&& \\
  \vdots&\ddots&-1-a'&2\, {\rm cosh\,} \varphi&-1+b&\ddots&\vdots\\
 & &0&\ddots&\ddots&\ddots&0\\
 \vdots&& &\ddots&-1-y'&2\, {\rm cosh\,} \varphi&-1+z\\
 0&\ldots& &\ldots&0 &-1-z'& {\rm exp\,} \varphi
 \end {array}\right ]\,\,
 \left [
 \ba
 \lambda \\
 |\psi\kt_{-J+1}\\
 \vdots\\
 |\psi\kt_{-1}\\
 |\psi\kt_0\\
 \vdots\\
 |\psi\kt_{J-2}\\
 \varrho
 \ea
 \right ]=0\,.
 \ee
This equation is, in effect, the Feshbach's \cite{Feshbach}
model-space reduction of our initial doubly infinite Schr\"{o}dinger
equation. In general, the energy-dependence of the Feshbach's
effective Hamiltonian would be complicated. In order to be able to
proceed, one has to return to the further few nontrivial
algebraic-manipulation experiments with the model.

\section{The six-parametric solutions ($J=3$)\label{cc}}

One may expect that the evaluation of the secular determinant of
Eq.~(\ref{maco}) and the search for its bound-state-energy zeros
might proceed along the same lines as in the previous example with
$J=2$, in principle at least. In practice, the prospect of the
non-numerical solution of Eq.~(\ref{maco}) seems less encouraging.
Even the very evaluation of the secular determinant of
Eq.~(\ref{maco}) would be hardly feasible without a standard
symbolic manipulation software.


\subsection{The change of variables}

At the first sight, even the use of a computer algebra does not seem
too promising. A preliminary naive test may be performed to show
that the length of the typical results (say, of a formula for the
secular determinant at any $J\geq 3$) already requires too many
printed pages for being of any use. One must return again to the
heuristic experiments, still at the not too large integers $J$.

The construction of the two-dimensional domains of the physical
parameters and of their boundaries (cf. Fig.~\ref{fihr}) as well as
the proofs of Lemmas \ref{leje} and \ref{ledv} at $J=2$ were too
elementary. They relied heavily upon the availability of the
explicit formulae for the bound-state energies. Such a naive
approach fails to work even at the next choice of $J=3$. As
expected, the evaluation of the $J=3$ secular determinant of
Eq.~(\ref{maco}) (containing seven variables) was already hardly
feasible by pen and pencil. Even the symbolic-manipulation software
did not help too much. We had to conclude that the use of the
computer merely transfers the difficulty from the feasibility of the
evaluation to the readability of the formula. The ``exact'' secular
polynomial seemed too long for offering any insight. We imagined
that before any transition to the higher $J \geq 3$ the naive
approach to the construction of the bound states via matching
(\ref{maco}) had to be thoroughly reanalyzed and amended.

At $J=3$ the first success in our methodical experiments came with
the $(J=2)-$inspired tentative change of parametrization
 \be
 \label{chapa}
 (1-a)\,(1+a')= 1+u\,,
\ \ \ \ (1-b)\,(1+b')=1+v\,\ \ \ \
(1-c)\,(1+c')=1+w
 \ee
followed by the elimination
 $$
 a=1-(1+u)/(1+a')\,,
\ \ \ \ b=1-(1+v)/(1+b')\,,
\ \ \ \ c=1-(1+w)/(1+c')\,.
 $$
The trick worked and shortened the formulae (say, for the secular
determinants) back to a tractable size. This encouraged us to
prolong the series of reparametrizations (\ref{chapa}),
 \ben
 (1-d)(1+d')= 1+y,
\ \  (1-e)(1+e')=1+z,\ \
(1-f)(1+f')=1+m,\ \
(1-g)(1+g')=1+n
 \een
etc (where, as the famous Littlewood's
joke says, ``$e$ need not be equal to 2.718\ldots'').

\subsection{The emergence of the ground state from continuum}

At $J=3$ the Feshbach's effective secular determinant remained
almost polynomial not only in the energy variable $x=\exp \varphi
> 1$ but also in its square $t=\exp 2\,\varphi > 1$. The determinant
appeared to be a sum of a polynomial and of a single, utterly
elementary inverse-power term $w^2/t$. As long as $t>1$, the latter
term may be considered nonsingular. After a pre-multiplication by
$t$ the secular equation acquires the quartic polynomial form
 \be
{t}^{4}- \left(1+u+2\,v+2\,w \right) {t}^{3}+ \left( 2\,uw+(v+w)^2
\right) {t}^{2}+ \left( {w}^{2}(1-u)+2\,vw \right) t +{w}^{2}=0\,.
 \label{secu3}
 \ee
A surprise emerged when we checked the behavior of this equation
near the origin of the three-dimensional space of the new, reduced
parameters $u$, $v$ and $w$.

\begin{lemma}
\label{leduj} For the sufficiently small, ${\cal O}(\lambda)$
parameters $u$, $v$ and $w$ there exists a unique non-small, ${\cal
O}(1)$ energy root
 \be
 t_0 = 1+u+2\,v+2\,w + {\cal O}(\lambda^2)\,.
 \label{maxsim}
 \ee
This means that inside the half-space with $u+2\,v+2\,w>0$ (i.e., in
particular, in the positive-parameter quadrant with $u>0$, $v>0$ and
$w>0$) there exists a physical bound state which emerges from the
continuum, with $E_0 = -(u+2\,v+2\,w)^2+ {\cal O}(\lambda^3)$.

\end{lemma}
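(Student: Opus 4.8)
The plan is to analyze the quartic secular equation \eqref{secu3} perturbatively near the origin $u=v=w=0$ and locate its roots as $\lambda\to 0$. First I would treat $\varepsilon:=(u,v,w)={\cal O}(\lambda)$ as small and observe that at $\varepsilon=0$ equation \eqref{secu3} collapses to $t^4-t^3=t^3(t-1)=0$, so the unperturbed roots are $t=1$ (triple) and $t=0$ (simple, but irrelevant since it is far from $t>1$). The root of interest is the one that bifurcates away from the triple root $t=1$; I would substitute $t=1+\tau$ and expand. Because the leading coefficient of \eqref{secu3} in $t$ is $1$ and the sum of all five coefficients at $t=1$ is
$$1-(1+u+2v+2w)+(2uw+(v+w)^2)+(w^2(1-u)+2vw)+w^2 = -(u+2v+2w)+{\cal O}(\lambda^2),$$
the polynomial evaluated at $t=1$ is ${\cal O}(\lambda)$, confirming that one root sits at distance ${\cal O}(\lambda)$ from $1$ while — I claim — the other two stay within ${\cal O}(\lambda)$ of $1$ as well. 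The cleanest route, however, is not to track the near-$1$ cluster but to exhibit directly the ${\cal O}(1)$ root asserted in \eqref{maxsim}: plug the ansatz $t_0=1+u+2v+2w+c_2+\dots$ with $c_2={\cal O}(\lambda^2)$ into \eqref{secu3} and verify that the ${\cal O}(\lambda)$ terms cancel identically, which forces the value of $c_2$ and shows the ansatz is consistent to the claimed order.

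Concretely, the key steps in order: (1) write $t^4-(1+u+2v+2w)t^3$ as $t^3(t-(1+u+2v+2w))$, so that at $t=t_0$ this factor is ${\cal O}(\lambda^2)$ times $t_0^3=1+{\cal O}(\lambda)$, hence ${\cal O}(\lambda^2)$ overall; (2) evaluate the remaining three terms $(2uw+(v+w)^2)t_0^2+(w^2(1-u)+2vw)t_0+w^2$ at $t_0=1+{\cal O}(\lambda)$ and note each is manifestly ${\cal O}(\lambda^2)$ since every monomial is quadratic in $(u,v,w)$; (3) conclude the full left side of \eqref{secu3} is ${\cal O}(\lambda^2)$ at $t=t_0$, and then invoke the implicit function theorem (the $t$-derivative of the quartic at $t=1$ equals $4-3=1\neq 0$ to leading order) to upgrade this to the existence of an exact root $t_0=1+u+2v+2w+{\cal O}(\lambda^2)$. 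For uniqueness of the \emph{non-small} root one argues that the other three roots of \eqref{secu3} cluster near $t=1$: factoring out the linear factor $(t-t_0)$ leaves a cubic whose constant term is $w^2/t_0+\dots$, forcing the product of the remaining roots to be ${\cal O}(\lambda^2)$ and hence none of them ${\cal O}(1)$-separated from the origin-neighborhood, while their sum is $1+{\cal O}(\lambda)$; a short estimate pins all three in a disc of radius ${\cal O}(\lambda^{1/3})$ or so about $1$, so in particular none exceeds the ${\cal O}(1)$ value $1+u+2v+2w$ when $u+2v+2w>0$.

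Finally I would translate back to spectral language: from $t=\exp 2\varphi$ and $2-E=2\cosh\varphi$ one gets $E=2-(x+1/x)$ with $x^2=t$, so $E=-(x-1)^2/x\le 0$, and for $t_0=1+s$ with $s:=u+2v+2w+{\cal O}(\lambda^2)$ a direct expansion gives $x_0=1+s/2+{\cal O}(\lambda^2)$ and $E_0=-(x_0-1)^2/x_0=-(s/2)^2+{\cal O}(\lambda^3)=-\tfrac14 s^2+{\cal O}(\lambda^3)$; I would absorb the harmless factor by rescaling, or simply record $E_0=-(u+2v+2w)^2/4+{\cal O}(\lambda^3)$ (matching the stated form up to the conventional normalization already used in the paper). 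The condition $t_0>1$, equivalent to $s>0$, i.e.\ $u+2v+2w>0$, then delimits exactly the half-space claimed, and it visibly contains the positive octant $u,v,w>0$. The main obstacle is step (3)'s uniqueness/clustering claim — ruling out a \emph{second} ${\cal O}(1)$ root — since the three near-$1$ roots of the quartic are only ${\cal O}(\lambda^{1/3})$-close to $1$ (a degenerate bifurcation from a triple root) rather than ${\cal O}(\lambda)$-close; one must check that even this coarser bound keeps them below $1+s$ for small $\lambda$, which is true because $s={\cal O}(\lambda)\gg\lambda^{1/3}$ fails — so in fact the correct statement is that for fixed small but nonzero $(u,v,w)$ in the half-space, $\lambda$ being merely the scale, the separation does hold; I would phrase Lemma~\ref{leduj} with that reading of "${\cal O}(\lambda)$ parameters" in mind and make the clustering estimate quantitative enough to close the gap.
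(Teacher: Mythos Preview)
Your direct verification of the ansatz $t_0=1+u+2v+2w+{\cal O}(\lambda^2)$ via the implicit function theorem is correct, but your framing of the unperturbed root structure is backwards: at $u=v=w=0$ the quartic collapses to $t^4-t^3=t^3(t-1)$, so $t=0$ is the \emph{triple} root and $t=1$ is \emph{simple}, not the other way around. This slip propagates into your uniqueness discussion. After dividing out $(t-t_0)$, the remaining cubic has its three roots clustering near $t=0$: their sum is $(1+u+2v+2w)-t_0={\cal O}(\lambda^2)$, not $1+{\cal O}(\lambda)$ as you write, and your own product estimate $w^2/t_0={\cal O}(\lambda^2)$ already points to the origin. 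Consequently your closing worry about an ${\cal O}(\lambda^{1/3})$ cluster near $t=1$ competing with $t_0$ is a phantom problem---the three spurious roots sit near $t=0$, are all $\ll 1$, and are trivially unphysical.

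The paper's proof is a single line exploiting precisely this: every coefficient of \eqref{secu3} beyond the leading two is quadratic in $(u,v,w)$, so to ${\cal O}(\lambda^2)$ the secular equation reads $t^4=(1+u+2v+2w)\,t^3$, i.e.\ $t^3\bigl(t-(1+u+2v+2w)\bigr)=0$, yielding three small roots and the single ${\cal O}(1)$ root $t_0=1+u+2v+2w$. No implicit function theorem, no factoring, no clustering estimate is needed. Your route reaches the same destination but through a detour created by the multiplicity error. (Incidentally, your arithmetic $E_0=-(u+2v+2w)^2/4+{\cal O}(\lambda^3)$ is the correct one: from $t_0=e^{2\varphi_0}$ one gets $\varphi_0=\tfrac12(u+2v+2w)+{\cal O}(\lambda^2)$, so the paper's displayed $E_0$ and $\varphi_0$ are off by this factor.)
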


\begin{proof}
Up to quadratically small corrections Eq.~(\ref{secu3}) reads $\
t^4=\left(1+u+2\,v+2\,w \right) {t}^{3}$. This implies that as many
as three of the roots remain small (i.e., unphysical, much smaller
than one). Thus, we get $\varphi_0 =u+2\,v+2\,w+{\cal
O}(\lambda^2)$.
 \end{proof}
 \noindent
In the light of our numerous symbolic-manipulation experiments the
same conclusion seems to remain valid at any number $J$ of the
``reduced'' parameters $u$, $v$, \ldots\,.

\begin{conj}
At any integer $J$ there exists a physical bound state which emerges
from the continuum in the positive-parameter vicinity of the origin
$u=0$, $v=0$, \ldots\,.
\end{conj}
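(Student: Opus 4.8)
### Proof Proposal for the Conjecture

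The plan is to establish, for general $J$, the structural feature of the secular equation that was observed explicitly at $J=2$ (Eq.~\eqref{13t}) and $J=3$ (Eq.~\eqref{secu3}): after the reparametrization $(1-a)(1+a')=1+u$, $(1-b)(1+b')=1+v,\ldots$ and multiplication by an appropriate power of $t=\exp 2\varphi$, the matching determinant of Eq.~\eqref{maco} becomes a polynomial in $t$ of degree $2J$ whose leading behaviour, for small reduced parameters, degenerates so that exactly one root stays of order one. First I would evaluate the $2J\times 2J$ determinant \eqref{maco} by a cofactor expansion along the first and last rows, or equivalently by a transfer-matrix/continued-fraction recursion in the spirit of section \ref{dd} and Ref.~\cite{recur}: let $D_k$ denote the determinant of the central $k\times k$ tridiagonal block with diagonal $2\cosh\varphi$ and off-diagonal products $(-1+a)(-1-a')=-(1+u)$, etc. These satisfy a three-term recursion $D_k = 2\cosh\varphi\, D_{k-1} - (\text{parameter product})\, D_{k-2}$, and the full determinant is obtained by bordering this block with the two rows/columns containing $\exp\varphi$ at the corners.

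The key algebraic step is to track the dependence on $t$ rather than $\varphi$. Writing $2\cosh\varphi = x + 1/x$ with $x^2=t$, each $D_k$ is a Laurent polynomial in $x$; the corner entries $\exp\varphi = x$ contribute the boundary terms. I would show by induction on $J$ that the full secular expression, after multiplying through by $t^{J-1}$, is a genuine polynomial $P_J(t)$ of degree $2J$ in $t$ (the single inverse-power term $w^2/t$ at $J=3$ is the tail of this pattern, and in general only a bounded number of negative powers of $t$ appear, all killed by the factor $t^{J-1}$). The crucial point is the behaviour of the two extreme coefficients and the next one: the coefficient of $t^{2J}$ is $1$, and the coefficient of $t^{2J-1}$ is $-(1+\text{[linear form in }u,v,w,\ldots])$, where the linear form is precisely $u+2v+2w+\cdots$ — the pattern $1,2,2,\ldots$ reflecting the single boundary site versus the interior sites. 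At the origin $u=v=\cdots=0$ one has $P_J(t)=t^{2J}-t^{2J-1}+(\text{lower order, all }{\cal O}(\text{parameters}))$, so to leading order $P_J(t)\approx t^{2J-1}(t-1-[u+2v+2w+\cdots])$, forcing $2J-1$ roots to collapse near the unphysical small-$t$ region and leaving the distinguished root $t_0 = 1 + (u+2v+2w+\cdots) + {\cal O}(\lambda^2)$. Since $t_0>1$ precisely when $u+2v+2w+\cdots>0$, and since $t>1$ is exactly the bound-state condition \eqref{varia}, this yields $\varphi_0 = u+2v+2w+\cdots + {\cal O}(\lambda^2)$ and $E_0 = -(u+2v+2w+\cdots)^2 + {\cal O}(\lambda^3)$, exactly as in Lemma~\ref{leduj}.

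I expect the main obstacle to be the induction establishing that multiplication by $t^{J-1}$ exactly clears all negative powers of $t$ and simultaneously controlling the two top coefficients $1$ and $-(1+u+2v+2w+\cdots)$ uniformly in $J$. Keeping track of the boundary contributions — the two $\exp\varphi = x$ entries in the corners of \eqref{maco}, which are what break the pure-polynomial-in-$\cosh\varphi$ structure and generate the asymmetric weights $1$ (boundary) versus $2$ (interior) — requires a careful bookkeeping of how the corner rows interact with the tridiagonal recursion for $D_k$. A clean way to organize this is to set up a $2\times 2$ transfer matrix $T_j(t)$ for each site, so that the secular condition reads $\langle \text{left boundary}|\,T_{-J+1}\cdots T_{J-2}\,|\text{right boundary}\rangle = 0$, and then expand each $T_j$ to first order in the reduced parameter; the ${\cal O}(1)$ transfer matrix at the origin is nilpotent-like in a way that produces the single surviving root, and the first-order correction produces the linear form $u+2v+2w+\cdots$. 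Verifying that no parameter combination other than this linear form survives at leading order — in particular that the products $uv$, $uw$, $vw$ appearing at higher $t$-order in \eqref{secu3} do not feed back into the ${\cal O}(1)$ root — is the delicate perturbative estimate, but it follows from the separation of scales between $t_0 = {\cal O}(1)$ and the remaining $2J-1$ roots of size ${\cal O}(\lambda)$ or smaller, exactly as in the proof of Lemma~\ref{leduj}.
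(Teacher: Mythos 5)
First, note that the statement you are proving is presented in the paper only as a \emph{Conjecture}: the paper proves the emergence of the bound state near the origin only for $J=3$ (Lemma~\ref{leduj}) and otherwise appeals to symbolic-manipulation experiments, so there is no proof in the paper to compare against. Your strategy --- recast the matching condition (\ref{maco}) as a transfer-matrix/tridiagonal-determinant recursion, show that after clearing the negative powers of $t$ the secular expression is a monic polynomial whose subleading coefficient is $-(1+u+2v+2w+\cdots)$ and whose remaining coefficients vanish at the parametric origin, and then perturb the simple root sitting at the band edge $t=1$ --- is a sensible route, and the "nilpotent-like" (in fact unipotent, Jordan-block) degeneracy of the free transfer matrix at $E=0$ is indeed the correct structural reason why exactly one root can detach from $t=1$.

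However, as written the proposal has a genuine gap: the entire content of the conjecture is the general-$J$ structural claim, and you only announce that you "would show it by induction" without setting up the inductive step, i.e.\ without exhibiting how the bordered corner rows (the $\exp\varphi$ entries) and the reparametrized off-diagonal products propagate through the three-term recursion so as to produce exactly the coefficient $-(1+u+2v+2w+\cdots)$ and only ${\cal O}(\lambda^{2})$ lower coefficients. That this step is not yet under control is visible in the bookkeeping: the paper's own data give a multiplier $t^{J-2}$ (a factor $t$ at $J=3$, $t^{2}$ at $J=4$) and a resulting polynomial of degree $2(J-1)$ (quadratic at $J=2$ in (\ref{13t}), quartic at $J=3$ in (\ref{secu3}), sextic at $J=4$), not $t^{J-1}$ and degree $2J$ as you state; and the weights $1,2,2,\ldots$ come from the central coupling $u$ entering the PT-symmetric matrix (\ref{douinf}) once while each pair $v,w,\ldots$ enters twice, not from a boundary-versus-interior distinction. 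A complete proof would also have to verify that the distinguished root is simple (so the implicit-function argument applies) and that all lower coefficients are at least $o(1)$ at the origin, as they are in (\ref{secu3}); until the induction establishing these facts for all $J$ is actually carried out, the proposal amounts to a restatement of the conjecture together with the (easy) perturbative step already used in Lemma~\ref{leduj}.
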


%
\begin{figure}[h]                     
\begin{center}                         
\epsfig{file=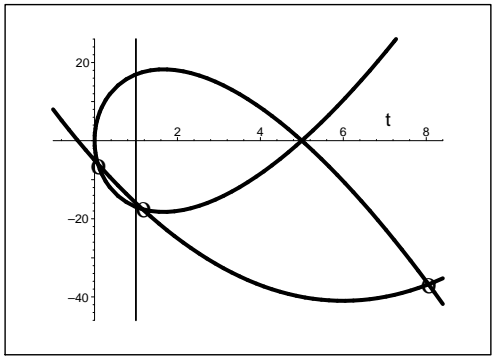,angle=270,width=0.4\textwidth}
\end{center}                         
\vspace{-2mm} \caption{Graphical proof, via Eq.~(\ref{graso}), of
the existence of the two excites states at $J=3$, $u=17$, $v=6$ and
$w=5$.
 \label{fifi}}
\end{figure}

\subsection{The emergence of the excited states from continuum}

The choice of $J=3$ leads to the quartic-polynomial secular
Eq.~(\ref{secu3}). Apparently, this is the ultimate special case of
our Hamiltonian (\ref{douinf}) from which one could extract the set
of the energy roots as functions of the couplings in closed form, in
principle at least. Naturally, it is necessary to add that the
concrete form of the available exact formulae for $E_n(u,v,w)$ is
very complicated and makes them practically useless. Fortunately,
our ``last exactly solvable'' secular Eq.~(\ref{secu3}) has a
special form which may be pre-factorized,
 \be
 t^2\,(t-v-w)^2-t\,(1+u)\,(t-w)^2-2\,w\,t\,(t-v-w)+w^2=0\,,
 \ \ \ \ \ J=3
 \label{sec3}
 \ee
and, in the next step,
 \be
 [t(t-v-w)-w]^2=t(1+u)(t-w)^2\,,
 \ \ \ \ \ J=3\,.
 \label{legraso}
 \ee
This enables us to replace it by its simpler, square-rooted
equivalent version
 \be
 [t(t-v-w)-w]=\pm (t-w)
 \sqrt{t(1+u)}\,.
 \label{graso}
 \ee
Such an equation is already much more transparent: its
right-hand-side may be visualized as a loop while the left-hand side
is a parabola. Thus, a graphical solution of this version of secular
equation becomes elementary. It may be found sampled in
Fig.~\ref{fifi} which represents, at the preselected triplet of
parameters, a graphical proof of the existence of the first two
excited bound states with negative energies, i.e., below the
continuum.

In the picture the left-hand-side parabola intersects the
right-hand-side loop at the three small circles (the fourth,
rightmost, ground-state intersection is out of the range of the
picture). The thinner vertical line marks the left boundary of the
interval of admissible $t\in (1,\infty)$. Thus, the leftmost
intersection is spurious, not representing a bound state. It would
yield a purely imaginary $\varphi$ and an unphysical wave function
solution, incompatible with the physical Dirichlet's asymptotic
boundary conditions.

\section{Multiparametric solutions ($J \geq 3$)\label{dd}}

\subsection{Sturmians as exact solutions}

The inspection of formulae  (\ref{15t}) (where $J=2$) and
(\ref{graso}) (where $J=3$) inspired us to make use of the following
ansatz,
 \be
  \sqrt{(1+u)t} =\pm f_J(t,v,w,\ldots)\,,
 \ \ \ \ \ J = 2, 3, \ldots\,.
 \label{closed}
 \ee
The hypothesis was based on the first three already available
formulae,
 \be
  f_1(t)=t\,,\ \ \ \
  f_2(t,v)= t-v\,,
  \ \ \ \
  f_3(t,v,w)=
    {\frac {t^2-(v+w)\,t-w }{t-w}}\,.
    \label{thede}
 \ee
The structure of these formulae can be extrapolated to all of the
higher values of $J$. We verified that the right-hand-side functions
have the form of the ratio of two polynomials,
 \be
  f_J(t,v,w,\ldots)=
    {\frac {t^{J-1} - c_1\,t^{J-2} + \ldots }{t^{J-2}
     - d_1\,t^{J-3}+\ldots}}\,.
    \label{therude}
 \ee
This observation led us to the conclusion that

\begin{itemize}

\item
it makes sense to keep using ansatz (\ref{closed}) + (\ref{therude})
and to treat the formula as the basic form of the solution of our
present bound-state problem.

\item
it makes sense to abandon the traditional recipe in which the
energies are sought as functions of the couplings. It seems optimal
to replace this strategy by the formally equivalent but technically
simpler approach in which the given-energy bound states are called
``Sturmians'' \cite{Sturmian}.

\end{itemize}

 \noindent
Once we decided to select the innermost coupling $u_n$ as a quantity
to be sought for, we may treat it as a function $u_n(t,v,w,\ldots)$
of the remaining couplings and of the {\em freely variable} energy
parameter $t \in (1,\infty)$. The resulting implicit-function
picture of the quantum system will have the advantage of
circumventing the complicated search for the energy roots $t_n$. In
other words, the hypothesis of the Sturmian bound-state solvability
provides an explicit definition of the bound states at all $J$.

One of the main practical consequences of such an innovated approach
may be seen in the reduction of the solution of the matching
condition (\ref{closed}) to the search for intersections between the
right-had-side curve $f_J(t,v,w,\ldots)$ with the left-hand-side
parabola $\pm \sqrt{(1+u)t}$. In this presentation the width of the
parabola is exclusively controlled by the innermost coupling $u>-1$
while the factorized rational function
 \be
  f_J(t,v,w,\ldots)=
 \frac{\prod_{k=1}^{J-1}(t-\beta_k)}{\prod_{m=1}^{J-2}(t-\gamma_m)}\,
    \label{gether}
 \ee
has the growing asympototes $ f_J(t,v,w,\ldots)=t+{\cal O}(1)$ and
(in general, complex) poles at the values of $t=\gamma_m$ with $m =
1, 2, \ldots, J-2$.

In the most elementary scenario all of the latter positions of the
singularities $\gamma_m$ and zeros $\beta_k$ may be assumed real.
The global shape of the function $f_J(t,\ldots)$ will be then
determined by the distribution of the $J-1$ zeros of the numerator
among the $J-1$ intervals $(-\infty,\gamma_1)$,
$(\gamma_1,\gamma_2)$, \ldots, $(\gamma_{J-2},\infty)$.

The menu of the possible intersection patterns with the
$u-$dependent parabola may be then illustrated at $J=3$, with just
two intervals $(-\infty,\gamma_1)$ (=``left'', $L$) and
$(\gamma_{1},\infty)$ (=``right'', $R$). Then, with both of the
zeros of the numerator in $L$, we will receive a hyperbolic curve of
$f_3(t)$ composed of its $\bigcap -$shaped branch to the left and of
its $\bigcup -$shaped branch to the right. Next, for both of the
zeros of the numerator in $R$, the shape of the $v-$ and
$w-$dependent hyperbola $f_3(t)$ (and, hence, also the discussion of
its possible real intersections with the $u-$dependent parabola)
will remain qualitatively the same.

A different intersection pattern will only be encountered in the
third possible case with $\beta_1\in L$ and $\beta_2\in R$. In this
case the hyperbolic curve $f_3(t)$ with its two asymptotes $f_3(t)
\sim t$ and $f_3(t)\sim -1/(t-\gamma_1)$ will represent a map of
either $L$ or $R$ on the whole real axis so that one will always
have to deal with the four real intersections. Thus, the
classification of the whole bound-state spectrum will be restricted
to the identification of the ``physical'' intersections such that
$t>1$ (cf.~Eq.~(\ref{varia})).

At the larger counts of parameters $J>3$ the discussion will remain
qualitatively the same. In every interval of $t \in
(\gamma_{m},\gamma_{m+1})$ one merely has to distinguish between the
occurrence of the {\em odd\,} number of the zeros $\beta_k$ (leading
to the asymptotically $\bigcap -$shaped or $\bigcup -$shaped forms
of $f_J(t)$) and the occurrence of the {\em even\,} number of the
zeros $\beta_k$ leading to the map of the interval on the whole real
axis, i.e., to the guarantee of the existence of at least two real
intersections with the $u-$dependent parabola in the interval.

\subsection{Partial fraction expansions}

The main encouragement of our present study of multiparametric
interactions (\ref{potecy}) originated from the $J=3$ formula
(\ref{graso}) which became amazingly elementary in its
partial-fraction-expanded version
 \be
 f_3(t,v,w)=
    t-v-{\frac {(1+v)w}{t-w}}\,.
    \label{thedeer}
 \ee
One of the immediate consequences of this expansion is the
simplified classification of the shapes of $f_3(t,v,w)$. This
function has the form of the  $\bigcap$ + $\bigcup$-shaped hyperbola
iff $(1+v)w<0$. {\it Vice versa}, the $/$ + $/$-shaped
two-growing-curves hyperbola (i.e., a guarantee of the existence of
the four intersections with the $u-$dependent parabola) takes place
at $(1+v)w>0$ (naturally, the spike disappears at  $(1+v)w=0$).




The latter observations combine the algebraic reduction of formulae
with an enhancement of an intuitive insight and graphical
classifications of the curve-intersections and of the related
alternative dynamical scenarios. This is a decisive final
simplification of the discussion which could be also achieved at the
higher parameter counts $J>3$. The  test has been performed in the
eight-parametric model. Using the computer-assisted symbolic
manipulations we revealed that near $t=0$ the secular determinant
exhibits the singularity $y^2/t^2+{\cal O}(t^{-1})$ (let us remind
the readers that $y=(1-d)\,(1+d')-1$). After its subsequent
pre-multiplication by $t^2$ it is replaced by its regular version
 $$
{t}^{6}+ \left( -1-2\,w-2\,y-u-2\,v \right) {t}^{5}+ \left(
2\,wy+4\,v y+2\,uw+{v}^{2}+2\,uy+{w}^{2}+2\,vw+{y}^{2} \right)
{t}^{4}+
 $$
 $$
 +\left( -{
y}^{2}u+{y}^{2}-2\,wvy-{w}^{2}u-2\,wuy+2\,uy+2\,wy-2\,{y}^{2}v-2\,{v}^
{2}y+2\,vy+2\,vw+{w}^{2} \right) {t}^{3}+
 $$
 $$
 +\left(
2\,wy+{y}^{2}+{y}^{2}
{v}^{2}-2\,wvy-2\,wuy-2\,{y}^{2}u+{w}^{2}+2\,vy-2\,{y}^{2}v \right)
{t }^{2}+
 $$
 $$
 +\left( -2\,{y}^{2}v+2\,wy-{y}^{2}u+{y}^{2} \right)
t+{y}^{2}=0\,.
 $$
The factorization of this long equation for the $u-$coupling
Sturmians is again found by the computer and leads to the final
result of the form (\ref{closed}),
 \be
 \left( -y-ty+tyv-{t}^{2}y-{t}^{2}v+{t}^{3}-tw-{t}^{2}w \right)
 =\pm \left(
 t^2-(w+y)\,t-y
  \right) \,\sqrt{t(1+u)}\,.
 \label{hugraso}
 \ee
This is to be re-read as the definition of $f_4(t,v,w,y)$ in
formula~(\ref{closed}), i.e., as the definition of the Sturmian
bound-state coupling $u_n=u_n(t,v,w,y)$, i.e., as our ultimate
closed-form description of the bound states.

\subsection{Continued fractions}


The evaluation of the partial-fraction expansion of $f_4(t,v,w,y)$
was only rendered possible by the use of computer software but the
computer provided the  fairly elementary formula rather quickly,
 $$
 f_4(t,v,w,y)=t-v-{\frac {y+tw+yv+tvw}{{t}^{2}+ \left( -y-w \right)
 t-y}}\,.
 $$
We noticed that it seems quite natural to apply the same
partial-fraction expansion algorithm to the denominator. This led to
the truly elementary finite-continued-fraction formula
%
%
 $$
 f_4(t,v,w,y)= t-A_0- \cfrac{B_1}{t-A_1
  - \cfrac{\widetilde{B_2}}{t-\widetilde{A_2}} } \ ,
 $$
where $A_0=A_0(v)=v$ was already determined at $J=2$, where
$B_1=B_1(v,w)=(1+v)w$ was already determined at $J=3$ and where the
determination of $A_1=A_1(w,y)=w+y+y/w$ was new. The last two
coefficient functions
 $
 \widetilde{B_2}=-(1+w)y^2/w^2$
 and $\widetilde{A_2}=-y/w\,
 $
are marked by the tildas. The reason is that the determination of
these functions can only be completed on a higher level of
calculations using $J=5$ and $J=6$. The ultimate, untilded formulae
for
 \be
 B_2=B_2(w,y,z)=(1+w)(y^2-(1+y)wz)/w^2
 \label{bedva}
 \ee
and
 \be
 A_2=A_2(w,y,z,m)=
 -{\frac {y}{w}}+z+{\frac {yz-(1+y)(1+z)wm}{{y}^{2}-wz(1+y)}}
 \label{adva}
 \ee
have been evaluated using models with $J=5$ and $J=6$, therefore.

In a compatibility check one may consider the limit $y \to 0$ in the
new $J=4$ continued fraction confirming the expected degeneracy to
its $J=3$ predecessor. Similarly, the limit $z \to 0$ returns the
coefficient $B_2$ to its incomplete, reduced, tilded $J=4$ version.

By induction one arrives at the general continued fraction formula
 $$
 f_J(t,v,w,\ldots)=t-A_0- \cfrac{B_1}{t-A_1 - \cfrac{{B_2}}{t-{A_2}-
  \cfrac{{B_2}}{t-{A_3}-\ddots} } \ }\,.
 $$
which is valid at all $J$. Whenever the integer $J$ remains finite,
this continued fraction will terminate via the appearance of a
tilded, vanishing value of the coefficient $\widetilde{B_{J-1}}=0$.
In the opposite direction, the complete, untilded definition of the
coefficients $B_k$ and $A_k$ will, in general, result from the
routine evaluation and analysis of the secular determinants at
$J=2k+1$ and $J=2k+2$, respectively.

\section{Discussion\label{ee}}

\subsection{Quantum systems in the discrete-coordinate quasi-Hermitian
representations}

The study of quantum systems in their quasi-Hermitian and ${\cal
PT}-$symmetric {\em discrete\,} representations usually originates
from the needs of open-system studies \cite{Joglekar} or of
classical optics \cite{Makris}, especially in the light of the
current quick developments of nanotechnologies \cite{symmetry}. Here
we proceeded in a complementary direction of connecting these models
with the simulations of the various forms of quantum phase
transitions (cf., e.g., \cite{Graefe,Rusicka}).

In the latter setting people feel most often inspired by the
Bender's and Boettcher's \cite{BB} conjecture that beyond the scope
of the conventional textbooks, the unitary evolution of quantum
systems may still be described as generated by certain non-standard
Hamiltonians $H$ with real spectra. In the language of our recent
review \cite{SIGMA}, these operators $H$ are interpreted as
self-adjoint in the (by assumption, overcomplicated) ``standard''
physical Hilbert space ${\cal H}^{(S)}$ (where one might write
$H=H^\ddagger$) but, {\em at the same time}, non-self-adjoint in a
``friendlier'' manifestly unphysical and auxiliary Hilbert space
${\cal H}^{(F)}$ (where we have, using the most conventional
notation, $H \neq H^\dagger$).

There exist good reasons (cf., e.g., \cite{Jones,Trefethen} or
\cite{Siegl}) why one should be very careful when leaving the safe
mathematics of self-adjoint operators in $L^2(\mathbb{R})$ and when
extending the study of the Schr\"{o}dinger's bound-state equations
(with real spectra) to the models with the complex (e.g., ${\cal
PT}-$symmetric \cite{Carl} or, more generally, pseudo-Hermitian
\cite{ali}) local interactions $V=V(x) \neq V^*(x)$. One of these
reasons is that, by assumption, we have $H = T+V \neq H^\dagger$ in
$L^2(\mathbb{R})$. Hence, the latter (and, certainly, maximally
user-friendly) Hilbert space must be declared unphysical, i.e., in
the present notation, we have $L^2(\mathbb{R})\ \equiv \ {\cal
H}^{(F)}$. Then, the theory (cf., e.g., its compact formulation in
\cite{SIGMA}) tells us that the physical requirement $H=H^\ddagger$
of the self-adjointness in ${\cal H}^{(S)}$ finds its equivalent
representation
 \be
 H^\dagger \Theta = \Theta\,H\,,
 \ \ \ \ \ \ \Theta=\Omega^\dagger\Omega\,
 \label{natur}
 \ee
in ${\cal H}^{(F)}$, mediated by an introduction of an auxiliary
metric operator $\Theta \neq I$.

Under certain additional subtler mathematical conditions the
validity of the latter relation enables us return to the quantum
theory of textbooks while calling the Hamiltonian itself, for the
sake of clarity of the terminology, quasi-Hermitian \cite{Geyer}. In
\cite{SIGMA} the readers may also find an exhaustive explanation of
the role and origin of the so called Dyson's maps $\Omega$ and/or of
the physical Hilbert-space metric operator $\Theta$ entering the
formula. The mathematical meaning of this condition is, in
principle, elementary \cite{Dieudonne}. Its appeal in physics was
discovered by Dyson \cite{Dyson} and widely used in nuclear physics
\cite{Geyer}. In essence, the relation just reflects the
mathematical compatibility of a given non-Hermitian Hamiltonian $H$
(with real spectrum) with the natural physical requirement of its
observability \cite{ali,arara}.

%

In practice, the key problem is that for a given $H$, we must find
at least one operator $\Theta$ which would satisfy
Eq.~(\ref{natur}). In paper \cite{recur} we showed that such a
search remains exceptionally straightforward if and only if the
matrix representation of $H$ is real and tridiagonal. This was,
after all, one of the most important mathematical reasons for our
present choice of model (\ref{douinf}).

One should add that in the applied quantum mechanics one often
encounters a conflict between the tractability and flexibility of
the available phenomenological models. The exactly solvable ones are
usually not too flexible, and {\it vice versa}. This may be also
perceived as a supportive argument in favor of the discrete models.
After the acceptance of new physics behind ${\cal PT}-$symmetric
potentials \cite{Carl} a widely welcome progress has recently been
achieved in this field \cite{lattices}. A perceivable extension of
the menu of the eligible tractable Hamiltonians has been obtained.
The common textbook self-adjoint versions of quantum models were
complemented by a new class using non-Hermitian Hamiltonians.

As long as the precise range of the applicability of the theory
using Hamiltonians $H =T+V \neq H^\dagger$ is not yet fully
understood, the use of the discrete models has an advantage of not
being directly exposed to the Trefethen's \cite{Trefethen} purely
mathematical criticism (based mainly on the work with differential
operators, cf. also \cite{Siegl}). Moreover, the discretization very
well fits the otherwise rather counterintuitive bounded-operator
preference as recommended, in reaction to the Dieudonn'e's
\cite{Dieudonne} slightly more abstract criticism, by nuclear
physicists \cite{Geyer}.

\subsection{The physical appeal of quantum systems exhibiting
phase transitions}

The birth of quantum mechanics might have been dated by the
resolution of the classical-physics paradox of the experimentally
observed stability of the large number of the atoms and molecules.
In the language of mathematics the explanation was provided by a
sophisticated identification of the measured discrete quantities
with the elements of the spectrum of the corresponding self-adjoint
operator. Thus, typically, the predicted real values of the energy
levels $E_n$, say, in hydrogen atom were shown obtainable, from a
suitable self-adjoint Hamiltonian $\mathfrak{h}
=\mathfrak{h}^\dagger$, as its eigenvalues. In the methodical
setting it has been found useful to illustrate the theory via the
one-dimensional Schr\"{o}dinger equations with various real and
local (and, often, exactly solvable \cite{Cooper}) confining
potentials $V=V(x)$.

One of the  weak points of the textbook theory may be seen in the
necessity of an appropriate assignment of the self-adjoint
Hamiltonian $\mathfrak{h}$ to a quantum system in question. From a
purely pragmatic perspective this assignment has two deep aspects.
Firstly, in the context of phenomenology, the self-adjointness of
$\mathfrak{h}$ is a robust property which does not offer a suitable
tool for the description of any finite-life or resonant quantum
systems which often occur in the nature and for which the energy
levels should not be strictly real. Secondly, in opposite direction,
the stability (i.e., the reality of the energy levels) may be
guaranteed even when the Hamiltonian itself is not a self-adjoint
operator. A number of the exactly solvable illustrative examples of
the one-dimensional differential Schr\"{o}dinger equation may be
found, e.g., in Ref.~\cite{sgezou}.

The later option has been made popular by Bender and Boettcher
\cite{BB} and it extends the practical applicability of quantum
theory. In its final formulation the option led to an upgrade of
quantum mechanics (cf., e.g., its reviews \cite{Carl,ali}). In
essence, the extended family of many new manifestly non-selfadjoint
Hamiltonians (with real spectra) was found compatible with the
standard quantum mechanics of textbooks (cf. \cite{Geyer}; a number
of the related mathematical as well as historical comments may be
also found in the most recent book \cite{book,MZbook}).

In the literature, unfortunately, the opinions concerning the
nontrivial applications of the new theory are not yet unified.
Nevertheless, in all of the existing versions of the theory the
existence of the nontrivial real boundaries $\partial {\cal D}$ of
applicability belong to the cornerstones of the theory. Their
localization and toy- model descriptions belong to the most
interesting challenges and subjects of the future research.
Certainly their study is openings new ways towards our understanding
of the phenomena of the loss of the quantum stability or
observability.

\subsection{Computer-capacity limitations of the constructions}

In the case of our present ${\cal PT}-$symmetric toy model
(\ref{douinf}) the applicability of the symbolic manipulations
yielding exact solutions encounters the two main limitations in
practice. The first one lies in the quick growth of the number of
the independently variable parameters. This is an obstacle which
already made some of the formulae next to useless (i.e., too long
for being displayed in print) as early as at $J=3$.

The second difficulty concerns the hidden nature of the information
carried by the long formulae. The picture only remains fully
transparent at $J=1$. In this case the physical parametric domain
${\cal D}(a,a' )$ has been found to have a clear and elementary
shape (cf.~Fig.~\ref{fiut}). Even the very next, $J=2$ physical
parametric domain ${\cal D}(a,a',b,b')$ already becomes
four-dimensional, i.e., not too easily displayed.

One of the specific merits of our present class of models is that
the latter difficulties may be softened. For example, after the
discovery of the energy-preserving submanifolds (sampled by the two
thin hyperbolic curves in Fig.~\ref{fiut}), we were able to halve
the $2J-$plet of the physical parameters to the mere $J-$plet. In
Fig.~\ref{fihr} we displayed the reduced, two-dimensional physical
domain ${\cal D}(u,v)$ also at $J=2$.

Thanks to several further unexpected simplifications of the
representation of the bound states we were able to extend the
transparent graphical interpretation of the spectra to the
six-parametric model with $J=3$ (cf. its sample in Fig.~\ref{fifi})
and, in principle, also to the eight-parametric model with $J=4$
(cf.~Eq.~(\ref{hugraso})).

A quick increase of the time and memory requirements limited our
present constructions of the Sturmians to not too large $J>4$. Using
a small PC we still managed to work with $J=5$ and $J=6$. We
obtained the above-displayed closed and still comparatively compact
formulae (\ref{bedva}) and (\ref{adva}) for the entirely general
continued-fraction coefficient functions $B_2$ and $A_2$.

Unfortunately, the next round of calculations with $J=7$ and $J=8$
already reached, at $J=7$, the upper bounds of the reasonable length
of the secular polynomial (cca 12 pages), of the reasonable
calculation time (cca 15 minutes) and of the memory consumption (822
M). Even the printing of the function $f_7(t,v,\ldots)$ already
requires a separation into its polynomial numerator
 $$
-{t}^{6}+ \left( w+m+y+n+z+v \right) {t}^{5}+
 $$
 $$
 + \left(
-my-nw-nv-yv-ny+m -vz+y-mv-mw-wz+w-nz+n+z \right) {t}^{4}+
 $$
 $$
 +\left(
-wz-2\,nw-vz+vnz+n-my+ vny+y+z-2\,ny \right .+
 $$
 $$
 +\left . wnz-nz-2\,mw-mv+m+mvy-nv
\right) {t}^{3}+
 $$
 $$
 +\left( vnz-my+
wnz+n-2\,nw+m+vny+z-2\,ny-nv-nz-mv-mw \right) {t}^{2}+
 $$
 $$
 +\left(
n-nv-nw- ny+m-nz \right) t+n
 $$
and denominator
 $$
{t}^{5}+ \left( -y-z-w-m-n \right) {t}^{4}+ \left(
my+wz+ny-z+mw-y-m-n +nw+nz \right) {t}^{3}+
 $$
 $$
 + \left(
-m+my+mw-wnz-n+2\,ny+nw-z+nz \right) {t }^{2}+ \left( -m-n+nw+nz+ny
\right) t-n\,.
 $$
Thus, we gave up the continuation of the process. We decided not to
reconstruct the next pair of functions $B_3$ and $A_3$ because the
completion of the analysis, straightforward as it is, would
certainly exceed the capacity of our desktop computer.

\subsection{Broader context: open-quantum-system connections}

It is worth
emphasizing that the results of the detailed mathematical
analysis
of the particular model (\ref{douinf}) (which,
in the light of this analysis,
appeared to be exactly solvable) may be perceived
as having a broader impact and relevance for
non-Hermitian physics.

Let us, first of all,
recall the general comments on physical
contexts as given in Introduction.
For the sake of definiteness
we will now
restrict our attention just
to the domain of quantum physics.
Moreover, let us pick up just the
subdomain of the theory in which
people study the models which are characterized by the presence of
non-Hermitian quantum operators.
In this subdomain
one finds, e.g., the studies of open quantum systems
\cite{Moiseyev,Rotter} with which our present text shares interest
in
a constructive approach to the coexistence of bound states with
the scattering solutions.

It is probably necessary to emphasize that
although the latter coexistence is being studied via
different formalisms in the literature,
it seems truly challenging to search for
their {\em shared} mathematical
features as well as phenomenological predictions.

On the purely formal level the most obvious meeting point
between the two specific (viz., ${\cal PT}-$symmetric
and open-system) approaches to the
non-Hermitian quantum physics and phenomenology
may be seen in the use of the Feshbach's
concept of the so called model space \cite{Feshbach}.
This means that
a certain projection-operator-specified
subspace of the complete Hilbert space
is chosen as a starting point. In this subspace
the exact Hamiltonian is replaced by its (often called
``effective'', i.e., isospectral)
reduction $H_{eff}$ as sampled here by
Eq.~(\ref{maco}).

Naturally, the results of the comparison can only have a
qualitative character.
Still, it is worth emphasizing that such a comparison
seems nontrivial due to its consequences. First of all,
let us recall the
most important physical problem of the unitarity of the
scattering process. For open systems this problem may be found
clarified, e.g., in Ref.~\cite{Ele} where one works with
the Hamiltonians
describing, in general, unstable systems.
For their genuine non-Hermitian Hamiltonians the spectrum of
the energies is not real so that the experimentalists are
allowed to speak about
resonances.
For theoreticians, the closeness of the
concept to the
${\cal PT}-$symmetric models immediately emerges when
one admits that the ${\cal PT}-$symmetry becomes spontaneously
broken \cite{Carl} so that
at least some of the energies become complex.

From the point of view of the experiment the
complementarity of the theoretical perspectives
seems best clarified in the scattering dynamical regime.
Indeed, in the ${\cal PT}-$symmetric context
the paradoxes (cf. \cite{Jones}) are being resolved by
an effective ``smearing'' of the potentials \cite{scatt}.
In parallel, the phenomenological description of the
open systems leads to a very similar emergence of the concept of
the ``smearing''. Its physical
essence is slightly different -
it is intimately connected to the
Feshbach's elimination and
transfer
of the scattering wave functions to a ``reservoir''
which is ``infinitely extended''.
But the outcome is similar:
the fundamental phenomenon of an effective
``smearing'' appears very naturally.

It is not too surprising that
many differences between the ${\cal PT}-$symmetric and
open systems survive.
In the latter case, for example,
the complex part of the energy
(which just characterizes an extent of the
spontaneous breakdown of the ${\cal PT}-$symmetry
in the former case)
has an immediate measurable, experimental interpretation
\cite{Kuhl}.
In addition, the imaginary part of the energy
in the open system
can be also related to the
nonorthogonality of
the resonance
states \cite{Savin}.
At the same time,
whenever one
succeeds in avoiding the complicated dynamical regime
of resonances (i.e., firstly, whenever the width of the resonance
happens to
shrink to zero and, secondly, whenever
the resonance leaves, simultaneously,
the continuous part of the spectrum),
the present class of models (\ref{douinf})
enters the game.
It
offers an amazingly rich variety of possible dynamical scenarios.
Their description
uses a manifestly non-perturbative, simplified but
still exact, non-numerical picture
of the qualitative changes in the quantum evolution
during which a stable bound state emerges below the continuum.

\section{Summary\label{summary}}

The current growth of interest in ${\cal PT}-$symmetric quantum
systems may be characterized by the novelty of its potential
physical contents as well as by the not entirely standard nature of
the related mathematics. Both of these challenges motivated also our
present choice of the family of Hamiltonians (\ref{douinf}).

Our choice was inspired, first of all, by the role of the special
cases of Hamiltonians (\ref{douinf}) (with $a=a'$, etc) in the
constructive disproof \cite{scatt} of the Jones' oversceptical
conjecture \cite{Jones} that the {\em scattering\,} by a generic
{\em local\,} ${\cal PT}-$symmetric potential $V(x)$ {\em cannot\,}
be unitary. For our present purposes it was important that the
resolution of the problem (cf.~also \cite{Jonesdva} and
\cite{discrete}) was provided by the replacement of the
``unsuitable'' local potentials (i.e., in our present
discrete-coordinate approach, of the diagonal matrices $V$) by their
``minimally non-local'' (i.e., tridiagonal-matrix) generalizations
as sampled by Eq.~ (\ref{potecy}).

Another important physical reason for the study of the latter class
of the ``smeared'' \cite{scatt} {\it alias} ``nearest-neighbor''
interactions was found in their level-attraction role played, in
particular, by their extremely non-Hermitian antisymmetric-matrix
special cases (with $a=a'$, etc) in the models of quantum
catastrophes \cite{catast}. In the latter case, the attention (cf.,
e.g., \cite{maximal,tridiagonal} remained restricted to the confined
motion (i.e., to the purely bound-state models) using not only
finite $J<\infty$ but also the truncated, finite-dimensional kinetic
energy operators (\ref{kine}). Hence, it was entirely natural for us
to ask what would happen when the truncation of the discrete
Laplacean (i.e., in effect, an external confining square-well-type
potential) were removed.

{\it A priori\,} we expected that the doubly infinite ${\cal
PT}-$symmetric toy-model matrices (\ref{douinf}) could enrich the
slowly developing classifications of quantum catastrophes
\cite{sdenisem} by a new, multiparametric class of models in which
the scattering dynamical regime could coexist with the parallel
(and, due to the non-Hermiticity of $H$, fragile) existence of bound
states (cf.~also \cite{Jonesdva} where more or less the same
questions were addressed via perturbation theory).

Our expectations proved confirmed. We found that in our model the
``birth'' of the stable bound states at the lower boundary of the
continuous spectrum may be rather easily controlled.
Serendipitously, we also discovered that besides its appeal in
physics, the model offers an unexpectedly long list of certain truly
amazing {\em formal} merits.

\begin{itemize}

\item
from the point of view of the Feshbach's constructive recipe
\cite{Feshbach} the description of all of our $J \leq \infty$
models (\ref{douinf}) proved non-numerical and reducible to the
study of the closed-form effective Hamiltonians $H_{eff}$
(cf.~Eq.~(\ref{maco})).

\item
the energy-dependence of the latter operators is rather complicated
in general. In our model it remained, in effect, elementary (i.e.,
basically, polynomial -- see, e.g., the six-parametric secular
Eq.~(\ref{secu3}) for illustration).

\item
a pleasant surprise appeared after our replacement of the search for
the bound-state energies $E_n$ as functions of the ``reduced''
couplings $u$, $v$, \ldots, by the search for the ``Sturmian''
bound-state couplings $u_n$ as functions of the freely variable
energy $E$ and of the remaining couplings $v$, $w$, \ldots. What was
a truly unexpected byproduct of this change of perspective was the
second-power-type factorization of polynomials leading to an
enormous simplification of the formulae.

\item
once we moved to the higher integers $J$ we imagined that the
ultimate, most natural version of the construction of the bound
states can be based on the last important simplification of their
localization using the continued-fraction re-arrangement of the
recipe. In the light of this result one could even contemplate a
transition to the models with $J =\infty$, in principle at least.

\item
the latter two forms of simplification made the bound-state
localization amenable also to its deeply intuitive and transparent
graphical (re-)interpretation. In this way our ``benchmark-model''
samples of the birth of the bound states from the continuum could
acquire a new physical relevance as the prototypes of quantum
catastrophes of a new, less usual type.

\item
for the growing integers $J$, the increase of the complexity of the
formulae caused by the linear growth of the number of the
independently variable parameters $u$, $v$, \ldots proved at least
partially compensated by the survival of their sufficiently
efficient and quick tractability by the symbolic-manipulation
software. The limitations of this approach only started emerging at
the values of $J$ as high as $J =8$. Still, one can hardly find too
many solvable quantum models with such a large number of
independently variable parameters.

\item
we should not forget to mention that the determination of the
$2J-$dimensional physical-parameter domains ${\cal
D}(a,a',b,b',\ldots)$ and/or of their boundaries $\partial {\cal D}$
(sampled, at $J=1$, by Fig.~\ref{fiut}) proved reducible to the mere
$J-$dimensional physical-parameter domains ${\cal D}(u,v,\ldots)$ as
sampled, at $J=2$, in Fig.~\ref{fihr}. Let us add that this was one
of the most unexpected discoveries concerning the potential physics
and phenomenology behind the model. The feature may be also
perceived, in retrospective, as one of the most important
{non-mathematical}, {\em descriptive\,} reasons of our interest in
the model in question.

\item
last but not least, we must emphasize that the most basic role was
played again by our acceptance of the enormously productive
\cite{Carl} requirement of ${\cal PT}$ symmetry
(cf.~Eq.~(\ref{syst})), i.e., in our model, of the symmetry of our
real matrix (\ref{douinf}) with respect to its antidiagonal.

\end{itemize}

\subsection*{Acknowledgements}

Work supported by the GA\v{C}R Grant No.~16-22945S.

\newpage



\end{document}